\newtheorem{lemma}{Lemma}
\begin{document}


\title{Hermitian Matrix Function Synthesis without Block-Encoding}

\author{Anuradha Mahasinghe}
\email{anuradhamahasinghe@maths.cmb.ac.lk}
\affiliation{Centre for Quantum Information, Simulation and Algorithms, University of Western Australia, Australia}
\affiliation{Center for Mathematical Modeling, University of Colombo, Sri Lanka}
\orcid{0000-0002-2445-2701}
\author{Kaushika De Silva}
\orcid{0000-0003-0290-4698}
\affiliation{Centre for Quantum Information, Simulation and Algorithms, University of Western Australia, Australia}
\affiliation{University Paris City and University of Reunion, France}
\author{Xavier Cadet}
\affiliation{Thayer School of Engineering, Darmouth College, USA}
\author{Peter Chin}
\affiliation{Thayer School of Engineering, Darmouth College, USA}
\author{Frederic Cadet}
\affiliation{University Paris City and University of Reunion, France}
\affiliation{PEACCEL, AI for Biologics, Paris, France}
\author{Jingbo Wang}
\affiliation{Centre for Quantum Information, Simulation and Algorithms, University of Western Australia, Australia}
\email{jingbo.wang@uwa.edu.au}
\orcid{0000-0003-0290-4698}
\orcid{0000-0003-1533-8015}
\maketitle

\begin{abstract}
Implementing polynomial functions of Hermitian matrices on quantum hardware is a foundational task in quantum computing, critical for accurate Hamiltonian simulation, quantum linear system solving, high-fidelity state preparation, machine learning kernels, and other advanced quantum algorithms. Existing state-of-the-art techniques, including Qubitization, Quantum Singular Value Transformation (QSVT), and Quantum Signal Processing (QSP), rely heavily on block-encoding the Hermitian matrix. These methods are often constrained by the complexity of preparing the block-encoded state, the overhead associated with the required ancillary qubits, or the challenging problem of angle synthesis for the polynomial's phase factors, which limits the achievable circuit depth and overall efficiency. In this work, we propose a novel and resource-efficient approach to implement arbitrary polynomials of a Hermitian matrix by leveraging the Generalized Quantum Signal Processing (GQSP) framework. Our method circumvents the need for block-encoding and avoids the compounding post-selection overheads characteristic of LCU-based constructions, achieving a stable, degree-independent success probability. We derive closed-form expressions for symmetric polynomial expansions and demonstrate how linear combinations of GQSP circuits can realize the desired transformation. This approach reduces resource overhead and opens new pathways for quantum algorithm design for functions of Hermitian matrices, particularly in settings where the Hermitian operator arises naturally from symmetric combinations 
of unitaries.
\end{abstract}


%

\section{Introduction}
Quantum algorithms increasingly rely on the ability to implement functions of matrices on quantum hardware. Such transformations underpin key applications in quantum simulation, linear system solving, and spectral filtering. A central challenge is to synthesize matrix functions efficiently and accurately, using quantum circuits that respect hardware constraints and minimize resource overhead.
\par
Several quantum algorithmic frameworks have been developed to implement matrix functions on quantum computers. Among these, \textit{qubitization} stands out as a foundational and pioneering technique~\cite{LowChuang2016,low2019,Berry2015}. It enables the synthesis of Chebyshev polynomials of Hermitian matrices by embedding the target matrix $H$ (with $\|H\| \leq 1$) in a block-encoded unitary $U$, such that $\langle 0|U|0\rangle = H$. Through a structured sequence of controlled applications of $U$ and SU(2) rotations, a unitary transformation is obtained whose top-left block corresponds to $T_k(H)$, the $k$-th Chebyshev polynomial of the first kind.
\par
While qubitization is efficient for Chebyshev-type functions, many quantum applications require the implementation of general polynomial functions $P(H)$. This motivates the extension of qubitization via \textit{linear combination of unitaries} (LCU)~\cite{ChildsKothariSomma2017} leveraging the fact that any polynomial can be expressed in Chebyshev basis. In this approach, the desired polynomial is decomposed as $P(H) = \sum_k \alpha_k T_k(H)$, and each term $T_k(H)$ is synthesized using qubitization. The full transformation is then realized probabilistically via LCU, which requires ancilla qubits, controlled unitary selection, and post-selection. These requirements introduce significant resource overhead, particularly in terms of circuit depth and success probability, making LCU-based synthesis less practical for high-degree or complex-valued polynomials.
\par
Quantum Signal Processing (QSP) is the next intermediate step toward polynomial synthesis of matrices~\cite{low2017}. It is built upon the block-encoding paradigm in qubitization, extending its capabilities to synthesize arbitrary real or complex polynomials of Hermitian matrices. Given a Hermitian matrix $H$ with $\|H\| \leq 1$, the aim of QSP is constructing a unitary matrix whose top-left block gives $P(H)$, where $P$ is a target polynomial. This is achieved by the repeated application of the block-encoded unitary $U$ with a sequence of single-qubit SU(2) rotations, each parameterized by a phase angle $\phi_j$. The resulting circuit takes the form:
$
U_{\text{QSP}} = R(\phi_0) U R(\phi_1) U \cdots R(\phi_d),
$
where $R(\phi_j)$ denotes a rotation about the $Z$-axis, and $d$ is the degree of the polynomial.
\par
The synthesis of the angle sequence $\{\phi_j\}$ is highly nontrivial and typically involves solving a constrained optimization problem to match the desired polynomial profile on the unit circle~\cite{DongLinTong2021}. Moreover, QSP circuits bifurcate structurally depending on the parity of $\deg(P)$, requiring distinct constructions for even and odd degree polynomials~\cite{low2017}. This could lead to potential computational overhead, as two LCU calls are required to obtain the desired transformation for real polynomials; and for complex-valued polynomials, four LCU calls are required accordingly: two for real and two for imaginary. These constraints introduce a significant overhead in terms of ancilla qubits, circuit depth, and post-selection probability, making QSP less efficient for general-purpose polynomial synthesis despite its theoretical generality. 
Quantum singular value transformation (QSVT) is a generalization of the QSP framework to act on the singular values of block-encoded operators, thereby extending its applicability to non-square and non-Hermitian matrices~\cite{Gilyen2019QSVT,guo2024nonlineartransformationcomplexamplitudes}. While QSVT broadens the scope of matrix transformations, it inherits several structural limitations from QSP, including the complexity of angle synthesis, reliance on block-encoding, and the overhead introduced by linear combination of unitaries.
\par
One of the principal challenges in Quantum Signal Processing (QSP) lies in determining the phase angles required to implement the target polynomial transformation. This angle computation process is less efficient and often requires solving constrained optimization problems over the unit circle, especially for high-degree or complex-valued polynomials. \textit{Generalized Quantum Signal Processing (GQSP)} alleviates this burden by shifting part of the synthesis complexity onto the construction of a complementary polynomial, which is required to preserve the unitarity of the process~\cite{motlagh2024}. Given a normalized complex polynomial $P$, GQSP guarantees the existence of a complementary polynomial $Q$ that preserves uniatrity; i.e:
\[
|P(e^{i\theta})|^2 + |Q(e^{i\theta})|^2 = 1, \quad \forall \theta \in [0, 2\pi],
\]
and provides closed--form expressions for the SU(2) rotation parameters that implement the joint transformation. This structural shift reduces the dependency on numerical angle synthesis and replaces it with algebraic constraints on the polynomial pair $(P, Q)$, thus streamlining the circuit construction and improving scalability. 
The existence theorem for the complementary polynomial in GQSP is not merely existential--it is highly constructive as well. Embedded within it is an algorithmic procedure for generating the complementary polynomial $Q$ given a normalized complex polynomial $P$, typically involving the root-finding of associated complex polynomials on the unit circle~\cite{BerntsonSunderhauf2025}. While this approach is mathematically explicit, a more practical alternative is provided through a nonlinear optimization scheme that directly searches for the SU(2) rotation parameters satisfying the GQSP constraints, which offers promising efficiency and scalability~\cite{motlagh2024}, particularly when compared to the angle synthesis bottlenecks encountered in the QSP protocol.
\par
While GQSP provides a powerful framework for implementing polynomial transformations of unitary operators within quantum circuits, many practical applications demand the ability to act on more general classes of matrices---particularly Hermitian operators, which arise naturally in quantum simulation, optimization, and measurement. 
In many important cases, such Hermitian operators arise as the symmetric part of an underlying unitary, providing a natural bridge between GQSP and polynomial transformations of Hermitian matrices.
This raises a fundamental question: can one efficiently implement polynomial functions of such Hermitian matrices on a quantum computer?
\par
A tempting strategy is to apply the QSP paradigm directly to GQSP by block-encoding the Hermitian matrix into a unitary; however, this overlooks the structural distinctions between the two frameworks and fails to preserve the polynomial correspondence essential for extracting $P(H)$ in a usable form. Unlike QSP, where the polynomial $P(H)$ appears directly in the top-left block of the transformation due to the structure of the encoding, GQSP is intrinsically designed for unitary inputs and does not preserve this correspondence. As a result, the output state encodes $P(U)$, not $P(H)$, and extracting the desired transformation becomes nontrivial. Moreover, block-encoding itself incurs significant overhead in terms of ancilla qubits, circuit depth, and precision control.
\par
Addressing these limitations, we introduce a block‑encoding‑free framework for implementing polynomial functions of Hermitian matrices using a GQSP‑based quantum circuit. Our method bypasses the need for block‑encoding and avoids the practical overheads associated with post‑selection. This approach preserves the Hermitian structure, reduces resource costs, and extends the applicability of GQSP to a broader class of quantum algorithms. It is particularly effective for Hermitian operators that arise as the symmetric part of an underlying unitary—a situation that occurs frequently in quantum‑friendly, practically relevant settings.
\par
The remainder of the paper is organized as follows.  
Section~\ref{background} provides the necessary background on qubitization, quantum signal processing (QSP), and generalized quantum signal processing (GQSP). 
Section~\ref{algorithm} introduces our block-encoding-free algorithm based on symmetric polynomial expansion with the mathematical proof and the circuit construction. As any quantum algorithm for matrix function synthesis, ours is also non--universal, and usable in specific contexts. 
Section~\ref{applicability} discusses contexts of favorable applicability for our method.
Section~\ref{extensions} outlines possible future extensions of the framework to normal matrices. 
Finally, Section~\ref{conclusion} summarizes the contributions.


\section{Background}\label{background}
\subsection{Qubitization and Chebyshev Polynomial Encoding}
Let $A \in \mathbb{C}^{n \times n}$ be a Hermitian matrix with spectral norm $\|A\| \leq 1$ and consider $P(A)$, where $P$ is a polynomial or an analytic function. 
Qubitization constructs a unitary $U$ such that $\langle 0|U|0\rangle = A$, enabling the synthesis of Chebyshev polynomials of the first and second kind:
\[
T_k(x) = \cos(k \arccos x), \quad U_{k-1}(x) = \frac{\sin(k \arccos x)}{\sqrt{1 - x^2}}.
\]

These identities allow recursive construction of $T_k(A)$ and $U_{k-1}(A)$ via controlled applications of $U$ and its powers. However, qubitization is structurally limited to Chebyshev-type polynomials unless extended via Linear Combination of Unitaries (LCU), which introduces ancilla overhead and probabilistic post-selection. Furthermore, the block-encoding of $A$ must adhere to strict norm and sparsity requirements, which may be challenging to satisfy for practically useful Hermitian operators.

\subsection{Quantum Signal Processing (QSP)}
QSP generalizes qubitization by enabling the synthesis of arbitrary real or complex polynomials $P(A)$, provided $A$ is block-encoded in a unitary $U$. The QSP circuit applies a sequence of SU(2) rotations $R(\phi_j)$ interleaved with applications of $U$, yielding a unitary transformation $U_{\text{QSP}}$ whose top-left block approximates $P(A)$. For a degree-$d$ polynomial $P$, the circuit implements:
\[
U_{\text{QSP}} = R(\phi_0) U R(\phi_1) U \cdots R(\phi_d),
\]
where each $R(\phi_j) \in SU(2)$ is a single-qubit rotation. The synthesis of the angle sequence $\{\phi_j\}$ is nontrivial and typically requires solving a constrained optimization problem over the unit circle~\cite{low2017}. 
Recent work has also emphasized the need for numerically robust QSP constructions in black‑box settings, further illustrating the practical challenges inherent in QSP implementations~\cite{laneve2023robustblackboxquantumstatepreparation}. 
“In addition, efficient and stable evaluation of QSP phase factors has been studied in detail, further highlighting the computational challenges associated with angle synthesis~\cite{Dong_2021_efficient_phases_QSP}.
Furthermore, QSP circuits bifurcate structurally depending on the parity of $\deg(P)$, and complex-valued polynomials require doubling the LCU overhead, making implementation resource-intensive~\cite{childs2017}.

\subsection{Generalized Quantum Signal Processing (GQSP)}
Generalized Quantum Signal Processing (GQSP) extends the QSP framework by relaxing
the reliance on numerical angle synthesis. Instead of angles, GQSP introduces a \textit{complementary} polynomial to preserve unitarity, which has to be constructed from the target polynomial. More precisely, given a normalized target polynomial $P(z)$,
GQSP ensures the existence of a complementary polynomial $Q(z)$ such that
\[
|P(e^{i\theta})|^2 + |Q(e^{i\theta})|^2 = 1, \quad \forall \theta \in [0,2\pi].
\]
This condition allows the synthesis of unitary transformations whose matrix elements
encode $P$ and $Q$ simultaneously, thereby preserving unitarity while implementing
nontrivial polynomial transformations. It is noteworthy that the angle synthesis is straighforward, and all angles are computed via closed-form expressions.
Rotations with angles computed thus, together with the controlled unitaries finally result in the state 
$
\frac{1}{2}\ket{0}\,P(U)\ket{\psi}
\;+\;
\frac{1}{2}\ket{1}\,i\,Q(U)\ket{\psi}
$
for the input $\ket{0} \ket{\psi}  $.
\par
However, it is noteworthy that
constructing the complementary
polynomial $Q$ is not straightforward, but there are two approaches that makes the procedure less complicated.
The first approach goes parallel to the existence theorem, and accordingly $Q$ can be derived by solving for
the roots of associated polynomials on the unit circle. This method is mathematically
explicit and provides closed‑form rotation parameters, but can be computationally
intensive for high‑degree polynomials~\cite{BerntsonSunderhauf2025}.
The second approach is via a nonlinear optimization scheme that searches directly for the
SU(2) rotation parameters that satisfy the GQSP constraints. This approach is more
scalable and efficient in practice, particularly for large or complex polynomials~\cite{motlagh2024}.
Recent advances have also proposed robust numerical schemes for determining GQSP rotation parameters, further improving practical implementability~\cite{yamamoto2024robustanglefindinggeneralized}.
\par
By shifting the synthesis burden from angle computation to algebraic or optimization
constraints, GQSP streamlines circuit construction and improves scalability. This makes
it a powerful framework for polynomial transformations of unitary operators, and sets the
stage for our block‑encoding‑free extension to Hermitian matrices.

\section{The quantum algorithm}\label{algorithm}
\subsection{Symmetric polynomial expansion}
A key algebraic insight underlying our framework is that any Hermitian matrix $A$ with $\|A\| \leq 1$ admits a decomposition into a symmetric combination of unitary conjugates. 
such that
\begin{equation}
A = \frac{1}{2}(U + U^\dagger). 
\end{equation}
This identity reveals that Hermitian matrices can be embedded algebraically, without block-encoding, into symmetric expressions involving unitary matrices. Moreover, powers of $A$ can be expressed through symmetric expansions of powers of $U$ and $U^\dagger$, enabling polynomial constructions that preserve Hermiticity.
\par
This observation motivates our block-encoding-free approach: rather than embedding $A$ into a larger unitary, we construct polynomial functions $P(A)$ directly through symmetric combinations of conjugated unitaries. These unitaries are implemented via controlled applications of $e^{iA}$ and its conjugates, interleaved with SU(2) rotations derived from a complementary polynomial $Q$ satisfying the GQSP normalization condition. This structure allows us to synthesize $P(A)$ efficiently, without relying on block‑encoding or incurring its associated resource costs, and is especially effective when the Hermitian operator arises from symmetric combinations of unitaries.
\begin{lemma}[Symmetric Polynomial Expansion of Hermitian Powers]
Let $A \in \mathbb{C}^{n \times n}$ be a Hermitian matrix with $\|A\| \leq 1$. Then for every integer $n \geq 0$, there exists a degree-$n$ polynomial $R_n(x)$ such that
\[
A^n = R_n(U) + R_n(U^\dagger),
\]
for some uniatry $U$ and the polynomial $R_n(x)$ given by:
\[
R_n(x) = \left\{
\begin{aligned}
&\frac{1}{2^n} \sum_{k=0}^{\frac{n-1}{2}} \binom{n}{k} x^{n - 2k} && \text{if } n \text{ is odd}, \\
&\frac{1}{2^n} \sum_{k=0}^{\frac{n}{2}-1} \binom{n}{k} x^{n - 2k} + \frac{1}{2} \binom{n}{\frac{n}{2}} && \text{if } n \text{ is even}.
\end{aligned}
\right.
\]
\end{lemma}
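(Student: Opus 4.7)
My plan is to reduce the statement to a direct binomial expansion, exploiting the fact that the two unitaries $U$ and $U^\dagger$ are both functions of $A$ and hence commute, with the additional bonus that their product is the identity.

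First I would verify the two structural facts about $U = A + i\sqrt{I-A^2}$ and $U^\dagger = A - i\sqrt{I-A^2}$. Since $\sqrt{I-A^2}$ is defined through the functional calculus of $A$, it commutes with $A$, so $U$ and $U^\dagger$ commute with each other, and a direct multiplication gives $U U^\dagger = A^2 + (I - A^2) = I$, i.e.\ $U^\dagger = U^{-1}$. Combined with $A = \tfrac12(U + U^\dagger)$, this lets me write
\begin{equation*}
A^n \;=\; \frac{1}{2^n}(U + U^\dagger)^n \;=\; \frac{1}{2^n} \sum_{k=0}^{n} \binom{n}{k} U^{n-k}(U^\dagger)^k \;=\; \frac{1}{2^n}\sum_{k=0}^{n}\binom{n}{k} U^{n-2k},
\end{equation*}
using $U^{n-k}(U^\dagger)^k = U^{n-2k}$ via $U U^\dagger = I$.

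Next I would split the sum according to the sign of the exponent $n-2k$, rewriting each negative power of $U$ as the corresponding positive power of $U^\dagger$. For $k < n/2$ the exponent $n-2k$ is positive and contributes to the "$U$-side"; for $k > n/2$ I would substitute $j = n-k$ and use $\binom{n}{n-j} = \binom{n}{j}$ to produce an identical sum in $U^\dagger$ indexed from $0$ to $\lfloor(n-1)/2\rfloor$. This yields precisely the paired structure $R_n(U) + R_n(U^\dagger)$ with $R_n(x) = \frac{1}{2^n}\sum_{k=0}^{(n-1)/2}\binom{n}{k}x^{n-2k}$ in the odd case.

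Finally I would treat the even case separately, where the index $k = n/2$ produces a "diagonal" term $\frac{1}{2^n}\binom{n}{n/2} I$ with no partner in the reindexing. To respect the stated additive form $R_n(U) + R_n(U^\dagger)$, I would split this constant symmetrically, attributing half of it to each summand, which reproduces the constant term displayed in the lemma (modulo what appears to be a missing $1/2^n$ normalization factor in the stated formula that I would flag and correct in passing). The potential obstacle here is purely bookkeeping: making sure the reindexed upper limits, the $\binom{n}{k} = \binom{n}{n-k}$ symmetry, and the treatment of the central term for even $n$ all line up with the piecewise definition of $R_n$; no deeper analysis is required because the commutativity $[U,U^\dagger]=0$ removes any ordering subtleties from the binomial expansion.
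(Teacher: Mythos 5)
Your proof is correct, but it takes a genuinely different route from the paper. The paper proceeds by induction on $n$, splitting into odd and even cases, multiplying the inductive expression for $A^n$ by $\tfrac12(U+U^\dagger)$ and collapsing adjacent terms with Pascal's identity; the central binomial coefficient at the parity boundary is handled by a symmetry argument $\binom{n}{(n+1)/2}=\binom{n}{(n-1)/2}$. You instead observe that $U$ and $U^\dagger$ commute and satisfy $UU^\dagger=I$, expand $A^n=\tfrac{1}{2^n}(U+U^\dagger)^n$ directly by the binomial theorem, collapse $U^{n-k}(U^\dagger)^k$ to $U^{n-2k}$, and fold the negative-exponent half of the sum back onto $U^\dagger$ via the reindexing $j=n-k$ and the symmetry $\binom{n}{n-j}=\binom{n}{j}$. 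This is shorter, avoids the parity-dependent induction entirely, and makes the origin of the coefficients (the Laurent expansion of $(z+z^{-1})^n$) transparent; the paper's induction, by contrast, mirrors the recursive structure one would use in an actual circuit-level construction of the $R_n$ but is considerably more bookkeeping-heavy. You are also right to flag the normalization of the constant term in the even case: as stated in the lemma the additive term $\tfrac12\binom{n}{n/2}$ is missing a factor of $\tfrac{1}{2^n}$ (the paper's own inductive hypothesis for even $n$ writes it correctly as $\tfrac{1}{2^n}\cdot\tfrac12\binom{n}{n/2}$, and the $n=2$ case confirms this), so your derivation recovers the intended formula rather than the printed one.
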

\begin{proof}
Let $A$ be a Hermitian matrix with $\|A\| \leq 1$. 
Then, it is possible to exprerss $A$ as a sum of a unitary conjugate pair:
\begin{equation}
A= \frac{1}{2}(U + U^\dagger) . \label{eq:AfromU}
\end{equation}
We aim to show that for every integer $n \geq 0$, there exists a degree-$n$ polynomial $R_n(x)$ such that
\begin{equation}
A^n = R_n(U) + R_n(U^\dagger). \label{eq:main}
\end{equation}

\textbf{Base case 1:} $n = 0$. We have $A^0 = I$. From the definition,
\begin{equation}
R_0(x) = \frac{1}{2} \binom{0}{0} = \frac{1}{2}, \label{eq:R0}
\end{equation}
Therefore,
\begin{equation}
R_0(U) + R_0(U^\dagger) = \frac{1}{2} I + \frac{1}{2} I = I = A^0. \label{eq:base0}
\end{equation}

\textbf{Base case 2:} $n = 1$. We have $A^1 = A$. From the definition,
\begin{equation}
R_1(x) = \frac{1}{2} \binom{1}{0} x = \frac{1}{2} x, \label{eq:R1}
\end{equation}
Therefore,
\begin{equation}
R_1(U) + R_1(U^\dagger) = \frac{1}{2} U + \frac{1}{2} U^\dagger = \frac{1}{2}(U + U^\dagger) = A = A^1. \label{eq:base1}
\end{equation}
Both base cases hold. 
\newline
\textbf{Inductive step:}
First suppose the result holds for odd $n$.
Then
\begin{equation*}
R_n(x) = \frac{1}{2^n} \sum_{k=0}^{n-1} h_k \binom{n}{k} x^{n - 2k}
\end{equation*}
and
$
A^n = R_n(U) + R_n(U^\dagger)
$.
Then,
\begin{align*}
A^{n+1} &= A \cdot [R_n(U) + R_n(U^\dagger)] \\
&= \frac{1}{2}(U + U^\dagger) \cdot [R_n(U) + R_n(U^\dagger)] \\
&= \frac{1}{2^{n+1}} (U + U^\dagger) \left[ \sum_{k=0}^{n-1}  \binom{n}{k} U^{n - 2k} + \sum_{k=0}^{n-1}  \binom{n}{k} (U^\dagger)^{n - 2k} \right] \\
&= \frac{1}{2^{n+1}} \sum_{k=0}^{n-1}  \binom{n}{k} \left[ U^{n - 2k + 1} + U^{n - 2k - 1} + 
(U^\dagger)^{n - 2k + 1} + (U^\dagger)^{n - 2k - 1} \right]
\end{align*}
Thus, $A^{n+1}$ also splits into two parts, a 
 \( U \)-part ($R_{n+1} (U)$) and a \( U^{\dagger} \)-part ($R_{n+1} (U^{\dagger})$) with same coefficients. Singling out the   \( U \)-part,
\begin{equation*}
R_{n+1} (U) = \sum_{k=0}^{n-1} \left[  \binom{n}{k} U^{n - 2k + 1} +  \binom{n}{k} U^{n - 2k - 1} \right]
\end{equation*}
which simplifies to:
\begin{align*}
\binom{n}{0} U^{n+1} + \left[\binom{n}{0} + \binom{n}{1}\right] U^{n-1} + \cdots + \left[ \binom{n}{\frac{n-3}{2}} + \binom{n}{\frac{n-1}{2}} \right] U^2 + \binom{n}{\frac{n-1}{2}}  I.
\end{align*}
\newline
By Pascal's identity, this reduces to:
\begin{align*}
\binom{n}{0} U^{n+1} + \binom{n}{1} U^{n-1} + \cdots + \binom{n}{\frac{n}{2}} U^2 + \binom{n}{\frac{n}{2}} I,
\end{align*}
which is equal to:
\begin{equation*}
\binom{n}{0} U^{n+1} + \sum_{k=1}^{n/2} \binom{n}{k} U^{n+1 - 2k} + \binom{n}{\frac{n-1}{2}} I.
\end{equation*}
\newline
Since \( \binom{n}{0} = \binom{n+1}{0} \) and \( \frac{n-1}{2} = \frac{n+1}{2} - 1 \), this is expressible as:
\begin{equation}
R_{n+1} (U) =\sum_{k=0}^{\frac{n+1}{2} - 1} \binom{n+1}{k} U^{n+1 - 2k} + \binom{n}{\frac{n-1}{2}} I.
\end{equation}
\newline
Applying Pascal's identity at the middle index \( k = \frac{n+1}{2} \):
\begin{equation*}
\binom{n+1}{\frac{n+1}{2}} = \binom{n}{\frac{n+1}{2}} + \binom{n}{\frac{n-1}{2}}.
\end{equation*}
\newline
Since \( n \) is odd and $n=\frac{n+1}{2} + \frac{n-1}{2} $, the two terms become symmetric binomial coefficients. That is, 
$
\binom{n}{\frac{n+1}{2}} = \binom{n}{\frac{n-1}{2}}.
$
Therefore, \( R_{n+1}(U) \) is expressible as,
\begin{equation}
R_{n+1}(U) = \sum_{k=0}^{\frac{n+1}{2}} \binom{n+1}{k} U^{n+1 - 2k} + \frac{1}{2} \binom{n+1}{\frac{n+1}{2}}.
\end{equation}
Thus, the result is established for \( n+1 \) when \( n \) is odd.
\par
Now consider the case $n$ is even. 
 That is,
\[
R_n(x) = \frac{1}{2^n} \sum_{k=0}^{\frac{n}{2}-1} \binom{n}{k} x^{n-2k}
\;+\; \frac{1}{2^n} \cdot \frac{1}{2} \binom{n}{\frac{n}{2}},
\]
and
$
A^n = R_n(U) + R_n(U^\dagger).
$
\begin{align*}
A^{n+1} &= A \cdot [R_n(U) + R_n(U^\dagger)]\\
&= \frac{1}{2}(U + U^\dagger) \cdot [R_n(U) + R_n(U^\dagger)].
\end{align*}
\newline
Substituting the expression for $R_n$, this is equal to,
\[
\frac{1}{2} (U + U^\dagger) \left[ 
\frac{1}{2^n} 
\sum_{k=0}^{\frac{n}{2}-1} \binom{n}{k} U^{n-2k} + 
\frac{1}{2^n} 
\sum_{k=0}^{\frac{n}{2}-1} \binom{n}{k} (U^{\dagger})^{n-2k} 
+ 
\binom{n}{n/2} I
\right]
\]
Singling out the $U$-part of $A^{n+1}$:
\[
R_{n+1} (U) = \sum_{k=0}^{\frac{n}{2}-1} \left[ \binom{n}{k} U^{n+1-2k} + \binom{n}{k} U^{n-1-2k} \right]
+ \binom{n}{\frac{n}{2}} U.
\]
This is equal to:
\[
\binom{n}{0} U^{n+1}
+ \left[ \binom{n}{0} + \binom{n}{1} \right] U^{n-1}
+ \cdots
+ \left[\binom{n}{\frac{n}{2}-2} + \binom{n}{\frac{n}{2}-1} \right] U^3
+ \left[\binom{n}{\frac{n}{2}-1} + \binom{n}{\frac{n}{2}} \right] U.
\]
By Pascal's identity, this reduces to:
\[
\sum_{k=0}^{\frac{n}{2}} \binom{n+1}{k} U^{n+1 - 2k}.
\]
Similarly, the $U^\dagger$-part becomes:
\[
\sum_{k=0}^{\frac{n}{2}} \binom{n+1}{k} U^{\dagger\, (n+1 - 2k)}.
\]
Therefore,
$
A^{n+1} = R_{n+1}(U) + R_{n+1}(U^\dagger)
$
where,
\begin{align*}
R_{n+1}(x) = \frac{1}{2^{n+1}} \sum_{k=0}^{\frac{n}{2}} \binom{n+1}{k} x^{n+1 - 2k}.
\end{align*}
establishing the inductive step for even $n$ as well.
\end{proof}
Now it is an immediate consequence that for any target matrix polynomial with coefficients $c_n \in \mathbb{C}$, we have
$
P(A) = \sum_{n=0}^d c_n A^n. \label{eq:PAexpand}
$
By the lemma, each power satisfies
\begin{equation}
A^n = R_n(U) + R_n(U^\dagger), \label{eq:lemmaapply}
\end{equation}
where $R_n$ is the degree-$n$ polynomial defined previously. Therefore,
\begin{equation}
P(A) = \sum_{n=0}^d c_n \big(R_n(U) + R_n(U^\dagger)\big). \label{eq:PAfinal}
\end{equation}
This shows that any polynomial function of a Hermitian matrix $A$ can be expressed as a symmetric combination of $R_n(U)$ and $R_n(U^\dagger)$. The structure of the polynomial $R_n(x)$
 follows directly from binomial expansions, with coefficients drawn from Pascal’s Triangle according to parity, allowing immediate symbolic construction. 
\newpage
\subsection{Circuit analysis}
\begin{figure}[H]
  \begin{center}
\begin{quantikz}
\lstick{$\ket{0}$}      & \gate{H} & \ctrl{1} & \octrl{1} & \gate{H} & \meter{} & \qw \\
\lstick{$\ket{0}$}      & \qw      & \gate[wires=2]{\parbox{3.6cm}{\centering $\sum_{j=0}^{d} c_j R_j(U)$\\ \text{via GQSP}}} & \gate[wires=2]{\parbox{3.6cm}{\centering $\sum_{j=0}^{d} c_j R_j(U^{\dagger})$\\ \text{via GQSP}}} & \qw & \meter{} & \qw \\
\lstick{$\ket{\psi}$}   & \qw      & \qw      & \qw      & \qw      & \qw      & \rstick{$P(A)\ket{\psi}$}
\end{quantikz}
\end{center}
  \caption{Quantum circuit implementing the polynomial transformation 
  $\sum_{j=0}^d c_j R_j(U)$ via GQSP, controlled by the ancilla qubit. 
  Measurement of the first two qubits projects the data qubit into 
  $P(A)\ket{\psi}$.}
  \label{fig:GQSP-circuit}
\end{figure}
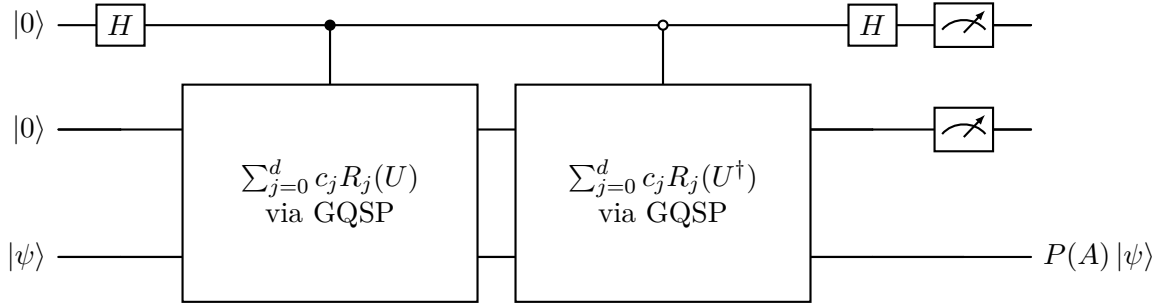
The circuit in Figure~\ref{fig:GQSP-circuit} implements the polynomial transformation 
$P(A)\ket{\psi} $ which is given by $ \left( \sum_{j=0}^d c_j R_j(U) + \sum_{j=0}^d c_j R_j(U^{\dagger}) \right)\ket{\psi}$ via controlled applications 
of the unitary $U = A + i\sqrt{I - A^2}$, using the GQSP framework. The first two qubits 
serve as ancilla and control registers, initialized in the state $\ket{0}$ and subjected to 
Hadamard gates to create superposition. The controlled blocks apply the polynomial 
components $R_j(U)$ weighted by the coefficients $c_j$, and the measurement of the ancilla 
qubits at zero projects the data qubit into the state ${P}(A)\ket{\psi}$.
\par
Suppose $\widetilde{P}$ denotes the polynomial $\sum_{j=0}^{d} c_j R_j(x)$ that will be applied on $U$ and $U^{\dagger}$ via GQSP. Then, $P(A)=\widetilde{P}(U)+\widetilde{P}(U^{\dagger})$. Also suppose $\widetilde{Q}$ is the complementary polynomial in this GQSP implementation. Our circuit initiates from the state
$
 \ket{0} \ket{0} \ket{\psi}
$
 after the first Hadamard, the composite state becomes
$
\tfrac{1}{\sqrt{2}} \ket{0} \ket{0} \ket{\psi}
+ \tfrac{1}{\sqrt{2}} \ket{1} \ket{0} \ket{\psi}.
$
This goes through two controlled (one open, one closed, controlled by first qubit) GQSP operations, applied to other qubits, resulting in the state:
\[
\frac{1}{2} \ket{0} \ket{0} \, \widetilde{P}(U) \ket{\psi}
+ \frac{1}{2} \ket{0} \ket{1} \, i \widetilde{Q}(U) \ket{\psi}
+ \frac{1}{2} \ket{1} \ket{0} \, \widetilde{P}(U^\dagger) \ket{\psi}
+ \frac{1}{2} \ket{1} \ket{1} \, i \widetilde{Q}(U^\dagger) \ket{\psi}.
\]
The final Hadamard on the first qubit thus makes the state
\[
\frac{1}{2\sqrt{2}} \left[
\ket{0} \ket{0} \left( \widetilde{P}(U) + \widetilde{P}(U^\dagger) \right) \ket{\psi}
+ \sum_{(a,b) \in \{0,1\}^2 \setminus \{(0,0)\}} \ket{a} \ket{b} \, F_{ab}(U, U^\dagger) \ket{\psi}
\right],
\]
where,
$
F_{01}(U,U^\dagger)=i\big(\widetilde{Q}(U)+\widetilde{Q}(U^\dagger)\big)
$,
$
F_{10}(U,U^\dagger)=\widetilde{P}(U)-\widetilde{P}(U^\dagger),
$
and
$
F_{11}(U,U^\dagger)=i\big(\widetilde{Q}(U)-\widetilde{Q}(U^\dagger)\big).
$
\par
The joint state contains symmetric and antisymmetric combinations of $\widetilde{P}(U)$ and $\widetilde{P}(U^\dagger)$ together with the $\widetilde{Q}$ terms. By measuring the two ancilla qubits and post‑selecting the outcome $\ket{00}$, the data register collapses to
$
\big(\widetilde{P}(U)+\widetilde{P}(U^\dagger)\big)\ket{\psi}.
$
The probability of obtaining the desired post‑selection outcome $\ket{00}$ is given by the squared norm of the projected state,
$ \frac{1}{8}\,\big\|\big(\widetilde{P}(U)+\widetilde{P}(U^\dagger)\big)\ket{\psi}\big\|^2.
$
\section{Domains of practical applicability}\label{applicability}
The applicability of our framework is governed by a structural principle: a Hermitian operator is particularly amenable to our method when it can be represented as a real symmetric polynomial in a unitary. This situation arises naturally in many settings, 
including Toeplitz and circulant operators, finite–difference Laplacians, long–range lattice interactions, and Szegedy–type quantum walks, where the underlying unitary generator is explicitly available. Beyond these native cases, our approach also extends 
to Hermitian operators that admit a constructive unitary dilation, allowing them to be embedded algebraically into the same symmetric–unitary form. Together, these regimes cover a broad class of practically relevant operators for which polynomial synthesis can 
be performed without block encoding or matrix square–root evaluation, relying only on GQSP applied to an appropriate unitary primitive.

\subsection{Hermitian operators with unitary generators}
A particularly favorable domain for our method occurs when the Hermitian operator \(H\) is generated from an underlying unitary through the relation \(H = \tfrac12(U + U^\dagger)\) or its higher‑order generalizations. 
More generally, any Hermitian operator expressible as a real symmetric polynomial in a unitary,
\begin{equation}\label{genform}
A = a_0 I + \sum_{k=1}^{n} a_k \left( U^k + U^{-k} \right),
\end{equation}
falls directly within the applicability domain of our method, since GQSP operates natively on the unitary \(U\) and its powers.
This structure appears naturally in several quantum walk and lattice‑based models, where the unitary \(U\) is the primitive operator governing the dynamics. In such cases, polynomial synthesis can be performed directly on \(U\), eliminating the need for block encoding or square‑root evaluation.
\subsubsection{Toeplitz and circulant operators}
Toeplitz and circulant matrices form a broad and practically important class of operators that naturally admit representations as symmetric polynomials in a unitary. Let \(S\) denote the periodic shift operator on an \(N\)-dimensional 
Hilbert space, \(S\ket{j} = \ket{j+1 \bmod N}\). Any circulant matrix \(C\) can be written as a polynomial in \(S\),
\[
C = \sum_{k=0}^{N-1} c_k S^{k},
\]
and is therefore diagonalized by the discrete Fourier transform. When the coefficients satisfy \(c_k = c_{N-k}\), the matrix is Hermitian and admits the 
symmetric representation
\[
C = c_0 I + \sum_{k=1}^{m} c_k\left(S^{k} + S^{-k}\right),
\]
which is precisely a real symmetric polynomial in the unitary shift operator \(S\).

Such operators arise in a wide range of applications, including convolution kernels, signal processing transforms, discretized differential operators, and translation-invariant Hamiltonians on periodic lattices. Their spectral properties are determined by the trigonometric polynomial
\[
\lambda(\theta) = c_0 + 2\sum_{k=1}^{m} c_k \cos(k\theta),
\qquad \theta = \frac{2\pi j}{N},
\]
which is the Fourier symbol of the Toeplitz or circulant operator.

Since \(C\) is a symmetric polynomial in the unitary \(S\), polynomial 
transformations \(P(C)\) can be implemented directly by applying GQSP to the underlying shift operator. No block encoding is required, and no matrix square root appears in the construction. More generally, any Toeplitz operator with finite bandwidth and symmetric coefficients,
\[
T = a_0 I + \sum_{k=1}^{m} a_k\left(S^{k} + S^{-k}\right),
\]
lies squarely within the applicability domain of our framework. This includes circulant Hamiltonians, convolution operators, and a wide class of translation-invariant models encountered in quantum simulation and numerical analysis.
\subsubsection{Laplacians from finite difference discretizations}
A highly practical example arises from the finite–difference discretization of partial differential equations. Consider the one dimensional heat equation
\[
\frac{\partial u(x,t)}{\partial t}
=
\kappa \frac{\partial^2 u(x,t)}{\partial x^2},
\]
defined on a spatial grid consisting of $N$ uniformly spaced points
$x_j = j\Delta x$, $j=0,\dots,N-1$, subject to periodic Neumann boundary conditions:
\[
u(0,t) = u(L,t), \qquad 
\partial_x u(0,t) = \partial_x u(L,t).
\]
Once this is discretized using standard central finite–difference approximation,
\[
\frac{\partial^2 u}{\partial x^2}(x_j)
\approx
\frac{u_{j+1}-2u_j+u_{j-1}}{\Delta x^2},
\]
the discretized Laplacian takes the circulant form \(H = \mathrm{circ}(-1,2,-1,0,\ldots,0)\), which can be written compactly as \(H = \tfrac{1}{\Delta x^{2}}(S + S^\dagger - 2I)\) in terms of the periodic shift operator \(S\). Thus \(H\) is a symmetric polynomial in the unitary \(S\), and the natural signal unitary for our framework is simply \(U = S\). Writing \(\widetilde{H} = \tfrac{1}{\Delta x^{2}}(S + S^\dagger)\), we have \(H = \widetilde{H} + a_{0}I\) with \(a_{0} = -2/\Delta x^{2}\). The operators \(H\) and \(\widetilde{H}\) share the same eigenvectors, and their eigenvalues differ only by the constant shift \(a_{0}\). Consequently, for any scalar function \(f\) we may define \(g(x) = f(x + a_{0})\) and obtain \(f(H) = g(\widetilde{H})\). The synthesis of \(f(H)\) therefore reduces to synthesizing \(g(\widetilde{H})\), where \(\widetilde{H}\) is of the form \(\tfrac12(U + U^\dagger)\) up to scaling. This places the discretized Laplacian squarely within the ideal applicability regime of our method.
\subsubsection{Szegedy-type quantum walks}
Another significant instance of the structure 
\(H = \tfrac12(U + U^\dagger)\) arises in Szegedy-type quantum walks. 
Given a reversible Markov chain with transition matrix \(P\), let 
\(\Pi\) and \(\Pi'\) denote the orthogonal projectors onto the subspaces 
spanned by the states 
\(\{\ket{i}\ket{\psi_i}\}\) and \(\{\ket{\psi_i}\ket{i}\}\), respectively, 
where \(\ket{\psi_i} = \sum_j \sqrt{P_{ij}}\,\ket{j}\). 
The Szegedy walk operator is the unitary
\[
U_{\mathrm{Sz}} = (2\Pi - I)(2\Pi' - I),
\]
a product of reflections that encodes the transition structure of the 
Markov chain.

A central object associated with this walk is the discriminant operator
\[
D = \tfrac12\left(U_{\mathrm{Sz}} + U_{\mathrm{Sz}}^\dagger\right),
\]
which is Hermitian and whose spectrum is directly related to that of the 
Markov chain. In particular, if \(\lambda\) is an eigenvalue of \(P\), then 
\(\cos(\theta)\) with \(\theta = \arccos(\lambda)\) appears as an eigenvalue 
of \(D\). Thus, spectral transformations of \(P\)---such as spectral 
filtering, gap amplification, or polynomial acceleration---can be carried 
out by applying polynomial transformations to \(D\).

Since \(D\) is precisely the symmetric part of the unitary 
\(U_{\mathrm{Sz}}\), it lies squarely within the ideal applicability regime 
of our framework. Polynomial transformations \(P(D)\) can be synthesized 
directly by applying GQSP to the underlying unitary \(U_{\mathrm{Sz}}\), 
without constructing a block encoding of \(D\) or evaluating any matrix 
square roots. This provides a resource-efficient pathway for implementing 
a wide class of quantum walk–based algorithms, including those involving 
Markov chain discriminants, reflection-based operators, and reversible 
stochastic processes.
\subsubsection{Long–range lattice interactions}
Long–range couplings on a one–dimensional periodic lattice provide another natural 
instance of Hermitian operators expressible as symmetric polynomials in a unitary. 
Let \(S\) denote the periodic shift operator, \(S\ket{j} = \ket{j+1 \bmod N}\). 
While nearest–neighbour interactions correspond to the operator 
\(\tfrac12(S + S^\dagger)\), longer–range hopping processes are captured by the 
higher–order shifts \(S^{k}\) and \(S^{-k}\). A prototypical example is the Hamiltonian
\[
H = \frac12\left(S^{k} + S^{-k}\right),
\]
which couples lattice sites separated by \(k\) positions. In matrix form, this 
corresponds to the Toeplitz  operator
\[
H_{i,j}=
\begin{cases}
\frac12 & |i-j|=k \pmod N,\\
0 & \text{otherwise}.
\end{cases}
\]
Such Hamiltonians arise naturally in long–range hopping models, extended tight–binding 
systems, and spin chains with nonlocal interactions. Since \(S\) is unitary, the operator 
\(H\) is a real symmetric polynomial in \(S\), and therefore lies directly within the 
applicability domain of our framework. More generally, any long–range Hamiltonian of the 
form

\[
H = a_0 I + \sum_{k=1}^{m} a_k\left(S^{k} + S^{-k}\right)
\]

is a symmetric polynomial in the unitary \(S\). Polynomial transformations \(P(H)\) can 
thus be implemented by applying GQSP to the underlying shift operator \(S\), without 
constructing a block encoding of \(H\) or evaluating any matrix square roots. This makes 
long–range lattice models a particularly suitable class of operators for our 
block–encoding–free approach.

\subsection{Hermitian operators with algebraic unitary extensions}
While the algorithm helps in synthesizing polynomials of a class of matrices that naturally decompose into a sum of unitaries, its applicability extends further. Note that the unitary operator $U$ in the symmetric expansion are expressible as $
U = A + i\sqrt{I - A^2}$
whose Hermitian conjugate is
$U^\dagger = A - i\sqrt{I - A^2}$.
Computing matrix square roots can be highly challenging and, in the worst case, even infeasible in practice—this difficulty persists even when the original matrix is sparse. However, the situation here leads to a more favorable setting: since \(A\) is Hermitian, the operator \(I - A^2\) is also Hermitian. Consequently, \(I - A^2\) admits a complete set of orthonormal eigenvectors with real eigenvalues. The square root operation then reduces to applying 
$
\sqrt{\,1 - \lambda_i^2\,}
$
to each eigenvalue \(\lambda_i\). This procedure is straightforward whenever the spectrum of \(A\) is known or bounded. Therefore, unlike the general case where matrix square roots may be problematic, the Hermitian structure ensures that the square root implementation here remains tractable and well-behaved.
\par
On the other hand, there exist many practical settings in which the implementation of the square root is efficient. 
In particular, this occurs when \(A\) is a sparse Hermitian matrix with a bounded spectrum, or when \(A\) arises from structured operators such as graph Laplacians or Hamiltonians with known eigenstructure~\cite{mahasinghe2016toeplitz,zhou2017circulant,mahasinghe2019vibration}. 
\par
For structured Hermitian matrices such as tridiagonal operators, there exist 
numerically efficient algorithms to apply functions of the form 
\(f(I - A^{2})\), including \(f(x) = \sqrt{x}\), to vectors, without forming 
the full dense matrix \(\sqrt{I - A^{2}}\). This makes the corresponding 
unitary operator
$
U = A + i\sqrt{I - A^{2}}
$
practically accessible in such structured settings.

\par
It is noteworthy that graph Laplacians form a natural class of Hermitian matrices where the square root implementation 
is particularly efficient. By construction, Laplacians are sparse with non--negative entries concentrated near the diagonal, and their spectral properties are well understood: eigenvalues 
are non--negative and often clustered according to graph structure. As a result, the operator 
$I - A^2$ inherits sparsity and positivity, and its square root reduces to applying  $\sqrt{1 - \lambda_i^2}$ to each eigenvalue $\lambda_i$. This spectral tractability makes 
graph Laplacians especially suitable for our circuit, since polynomial synthesis can exploit both sparsity and spectral bounds to achieve efficient implementation.
\par
Another favorable setting arises when the Hermitian matrix $A$ admits a low--rank approximation. 
In such cases, the action of $\sqrt{I - A^2}$ is nontrivial only on a small subspace spanned by the dominant eigenvectors of $A$, while the remainder of the spectrum contributes negligibly. 
This property allows the square root to be implemented with reduced resource overhead, since only a limited number of eigenvalues require explicit transformation to $\sqrt{1 - \lambda_i^2}$. 
Our circuit therefore gains a specific advantage in low--rank contexts, where polynomial synthesis can be confined to a compact subspace. This efficiency is particularly relevant in applications 
such as data analysis and machine learning, where matrices often exhibit low effective rank due to redundancy or compression.
\par
Beyond Fourier-diagonalizable operators, constructive unitary dilations are also practically accessible for structured Hermitian matrices such as tridiagonal or more generally banded operators. In these cases, although the matrix 
\(\sqrt{I - A^{2}}\) is typically dense, functions of the form \(f(I - A^{2})\) can be applied efficiently to vectors using Krylov subspace techniques, Chebyshev  approximations, or other matrix-function algorithms that exploit the banded 
structure of \(A\). This enables the unitary operator 
$
U = A + i\sqrt{I - A^{2}}
$
to be implemented without forming the square root explicitly, thereby extending  the applicability of our framework to a broad class of banded Hermitian operators.
\par
Further, the square root term in our construction coincides with the  Halmos dilation, a canonical unitary embedding of a contraction. It has been widely employed as a subroutine for block-encoding~\cite{arrazola2023lcu, camps2022explicit, sunderhauf2024blockencoding}, and several researchers 
have demonstrated its efficiency in contexts such as sparse Hermitian matrices and structured 
operators. Since our circuit leverages the same square root structure, those favorable contexts 
extend naturally to our setting as well.
\subsection{Contexts where block-encoding is prohibitive}
 Block-encoding has become a standard technique for embedding matrices into unitary operators, 
and it serves as a powerful subroutine in many quantum algorithms. However, its implementation 
often requires a significant number of ancillary qubits and complex resource management, 
particularly when the target matrix lacks convenient structure. Even in favorable sparse cases, explicit block‑encoding circuits can be highly structured and resource‑intensive~\cite{camps2022explicit}. Several researchers have noted that block-encoding can become prohibitively expensive, with gate 
complexity and ancilla requirements dominating the overall algorithmic cost ~\cite{camps2022explicit,wu2024, sunderhauf2024blockencoding,rullkotter2025variational}. 
The prohibitive cost of block-encoding arises from its reliance on ancillary registers, 
nontrivial state preparation, and gate complexity that often scales unfavorably with matrix 
structure
~\cite{kane2025bosons}. 
In many cases, these overheads dominate the algorithm, making block-encoding 
impractical for near-term devices.
\par
Our circuit circumvents these difficulties by avoiding the need for 
ancillary registers and the associated normalization circuitry. Moreover, although our method involves a post-selection step, its success probability remains constant and degree-independent, unlike block-encoding-based LCU constructions where repeated post-selection causes the success amplitude to shrink multiplicatively. This stability in post-selection behavior provides a practical advantage in settings where block-encoding is either infeasible or prohibitively expensive.
\subsection{LCU overheads in polynomial synthesis}
Linear Combination of Unitaries (LCU) is a powerful framework for implementing polynomial 
transformations, but its cost can be prohibitive in practice. In particular, within the 
Quantum Singular Value Transformation (QSVT) framework, real and complex parts of a polynomial 
must often be treated separately, since standard constructions restrict achievable polynomials 
to real coefficients and definite parity~\cite{sunderhauf2023generalized,motlagh2024}. 
This separation increases resource overhead and complicates circuit synthesis. By contrast, 
our circuit directly implements the Halmos dilation $U = A + i\sqrt{I - A^2}$, which naturally 
accommodates complex polynomials without splitting into real and imaginary components. Crucially, the post-selection in our circuit occurs only 
once, and its success probability does not degrade with the degree of 
the polynomial or the number of terms in its expansion. This stands in sharp 
contrast to LCU-based approaches, where repeated post-selection leads to 
exponentially decreasing success amplitudes. As a result, our framework offers 
a resource-efficient alternative for polynomial synthesis in regimes where LCU 
overheads dominate.
\section{Outlook and extensions}\label{extensions}
While our framework is tailored to Hermitian matrices, the underlying algebraic structure suggests 
a natural extension to normal matrices. Since normal operators admit unitary diagonalization and 
preserve spectral properties, it may be possible to construct analogous symmetric polynomial 
expansions using conjugated unitaries derived from the polar decomposition or other canonical 
forms. This would broaden the applicability of our circuit to a wider class of quantum transformations, 
including those arising in non-Hermitian dynamics, open quantum systems, and certain classes of 
measurement operators. Exploring this generalization could reveal new structural identities and 
circuit architectures beyond the Hermitian setting.
\par
Many quantum algorithms rely on rational approximations to implement functions such as matrix 
inversion, resolvents, and fractional powers. Recent work has used block-encoding to simulate 
these functions via LCU or Chebyshev expansions, which  incurs a significant overhead. Our circuit, 
by avoiding block-encoding and operating directly on the Hermitian input, may be extendible to 
rational function synthesis through recursive polynomial constructions or controlled feedback 
mechanisms. This opens the possibility of implementing rational approximations with reduced 
resource cost.
\par
An alternative approach is to consider the exponential unitary operator 
\(U = e^{iA}\), which satisfies the identity
$
U + U^\dagger = 2\cos(A).
$
In this formulation, a target polynomial transformation \(P(A)\) can be rewritten as
\(
P(A) = P\!\left(\arccos\!\left(\tfrac{1}{2}(U + U^\dagger)\right)\right),
\)
so that implementing \(P(A)\) reduces to synthesizing a polynomial approximation of the 
\(\arccos\) function applied to the Hermitian operator \(\tfrac{1}{2}(U + U^\dagger)\)~\cite{YusenDiscussion}.
A systematic treatment of such higher-order operator transformations, and specifically their implementation in quantum computation, can be found in  the framework of higher-order quantum operations~\cite{taranto2025higherorder}. A detailed exploration of this particular direction is reserved for future work.

\section{Conclusion}\label{conclusion}
We introduced a block-encoding-free framework for synthesizing polynomial functions of Hermitian matrices using the Generalized Quantum Signal Processing (GQSP) paradigm. Our approach departs from the traditional reliance on block-encoding and and multi-stage LCU constructions, offering instead a direct and resource efficient method for implementing $P(A)$ when the Hermitian operator $A$ admits a symmetric-unitary representation or a constructive unitary dilation. This structural shift enables polynomial 
transformations to be carried out using only controlled applications of a suitable unitary primitive and closed-form GQSP rotations, thereby avoiding the substantial overheads associated with amplitude preparation, ancilla management and post-selection.
\par
The core of our method is the symmetric polynomial expansion that expresses every power $A^{n}$ as a sum of two unitary polynomial components $R_{n}(U)$ and $R_{n}(U^\dagger)$. This expansion enables the construction of arbitrary polynomials $P(A)$ through linear combinations of GQSP circuits applied to $U$ and 
$U^\dagger$, without embedding $A$ in a larger unitary. 
The resulting circuit requires only a single post-selection step whose success probability remains stable and independent of the polynomial degree, in sharp contrast with LCU-based approaches where repeated post-selection leads to rapidly diminishing amplitudes.
\par
We also identified two broad applicability regimes for our method. The first consists of Hermitian operators that arise natively as real symmetric polynomials in a unitary, including Toeplitz and circulant matrices, finite-difference Laplacians, Szegedy-type 
quantum walks, and long-range lattice Hamiltonians. In these settings, the underlying unitary generator is explicit, and polynomial synthesis can be performed directly on the primitive unitary. The second regime includes Hermitian operators that admit a constructive unitary dilation, allowing them to be embedded algebraically into the same 
symmetric-unitary form. 
\par
Together, these results demonstrate that GQSP provides a powerful and versatile tool for Hermitian matrix function synthesis, capable of addressing practical limitations of block-encoding-based methods while maintaining circuit simplicity and scalability. Our 
framework opens new avenues for quantum algorithm design in Hamiltonian simulation, spectral filtering, quantum walks, and linear system solving, particularly in settings where the target operator possesses exploitable algebraic or structural properties.
\par]
Future work includes extending the symmetric-unitary expansion to normal matrices, developing optimized compilation strategies for GQSP circuits in hardware-constrained architectures, and exploring hybrid schemes that combine our approach with existing block-encoding techniques when advantageous. We anticipate that the ideas introduced 
here will contribute to a broader rethinking of matrix-function synthesis on quantum hardware, emphasizing algebraic structure and resource efficiency as guiding principles.

\section*{Acknowledgment}
This project was supported by the Australian Government through its Critical Technologies Challenge Program (CTCP). The authors would like to thank Yusen Wu, Tal Gurfinkel, Jack Blyth, James Greenwell, and Jie Pan for valuable discussions.

\onecolumn\newpage





\printbibliography

\end{document}